\documentclass[conference]{IEEEtran}
\IEEEoverridecommandlockouts
\usepackage{cite}
\usepackage{amsmath,amssymb,amsfonts}
\usepackage{algorithmic}
\usepackage{graphicx}
\usepackage{textcomp}
\usepackage{xcolor}
\def\BibTeX{{\rm B\kern-.05em{\sc i\kern-.025em b}\kern-.08em
    T\kern-.1667em\lower.7ex\hbox{E}\kern-.125emX}}

\usepackage{microtype}

\usepackage{amsthm,mathtools}
\usepackage{mathrsfs}
\usepackage{enumitem}
\usepackage[dvipsnames]{xcolor}
\usepackage[nameinlink,noabbrev]{cleveref}

\newtheorem{theorem}{Theorem}[section]
\newtheorem{proposition}[theorem]{Proposition}
\newtheorem{lemma}[theorem]{Lemma}
\newtheorem{corollary}[theorem]{Corollary}
\theoremstyle{definition}
\newtheorem{definition}[theorem]{Definition}
\newtheorem{example}[theorem]{Example}
\theoremstyle{remark}

\crefname{theorem}{theorem}{theorems}
\Crefname{theorem}{Theorem}{Theorems}
\crefname{proposition}{proposition}{propositions}
\Crefname{proposition}{Proposition}{Propositions}
\crefname{lemma}{lemma}{lemmas}
\Crefname{lemma}{Lemma}{Lemmas}
\crefname{corollary}{corollary}{corollaries}
\Crefname{corollary}{Corollary}{Corollaries}
\crefname{definition}{definition}{definitions}
\Crefname{definition}{Definition}{Definitions}
\crefname{example}{example}{examples}
\Crefname{example}{Example}{Examples}
\crefname{remark}{remark}{remarks}
\Crefname{remark}{Remark}{Remarks}

\DeclareMathOperator{\Tr}{Tr}
\DeclareMathOperator{\Diag}{Diag}
\DeclareMathOperator{\id}{id}

\newcommand{\CP}{\mathrm{CP}}
\newcommand{\DNN}{\mathrm{DNN}}
\newcommand{\R}{\mathbb{R}}
\newcommand{\N}{\mathbb{N}}

\allowdisplaybreaks

\begin{document}

\title{Exactness of the doubly nonnegative relaxation\\for qubit-output quantum channels
\thanks{This work is in part supported by the National Research Foundation of Korea (NRF, RS-2024-00451435 (20\%), RS-2024-00413957 (20\%)), Institute of Information \& communications Technology Planning \& Evaluation (IITP, RS-2025-02305453 (15\%), RS-2025-02273157 (15\%), RS-2025-25442149 (15\%), RS-2021-II211343 (15\%)) grant funded by the Ministry of Science and ICT (MSIT), Institute of New Media and Communications (INMAC), and the BK21 FOUR program of the Education, Artificial Intelligence Graduate School Program (Seoul National University), and Research Program for Future ICT Pioneers, Seoul National University in 2026.}
}

\author{\IEEEauthorblockN{Hyunho Cha}
\IEEEauthorblockA{\textit{Dept. of Electrical and Computer Engineering} \\
\textit{Seoul National University}\\
Seoul, Republic of Korea \\
aiden132435@cml.snu.ac.kr}
\and
\IEEEauthorblockN{Jungwoo Lee}
\IEEEauthorblockA{\textit{Dept. of Electrical and Computer Engineering} \\
\textit{Seoul National University}\\
Seoul, Republic of Korea \\
junglee@snu.ac.kr}
}

\maketitle

\begin{abstract}
The resource theory for nonnegativity of quantum amplitudes distinguishes completely positive completely positive (CPCP) quantum channels from the larger class of completely positive doubly nonnegative (CPDNN) quantum channels. Johnston and Sikora showed that all qubit-to-qubit quantum channels that are CPDNN are also CPCP. However, they left open the question of whether a qutrit-to-qubit quantum channel exists that is CPDNN but not CPCP. We prove that no such channel exists and, more generally, that every CPDNN quantum channel with qubit output is CPCP. Thus the doubly nonnegative relaxation is exact for all qubit-output quantum channels. Our argument yields an explicit structural picture in which, after a canonical permutation of the Choi matrix, every qubit-output CPDNN channel is determined by a nonnegative vector and a nonnegative matrix, subject to a single positive-semidefinite block constraint. This gives a complete binary-output normal form, an explicit formula for the action of the channel, and quantitative population-coherence tradeoff inequalities governing the unique output off-diagonal mode. In this sense, qubit output is the nontrivial binary regime in which trace preservation and double nonnegativity collapse the free-channel cone to a fully describable geometry.
\end{abstract}


\section{Introduction}

The resource theory for nonnegativity of quantum amplitudes introduced by Johnston and Sikora~\cite{johnston2022completely} studies quantum states whose density matrices lie in the cone of completely positive matrices and the quantum channels that preserve this cone. This framework is motivated by the role of nonnegative-amplitude states in connection with stoquastic Hamiltonians and the sign problem, and it provides a natural language for asking when nonnegativity is preserved under physical dynamics \cite{troyer2005computational, bravyi2006complexity, bravyi2010complexity}. In this setting, the free channels are the completely positive completely positive maps (CPCP), namely the maps that preserve complete positivity of matrices even after tensoring with an arbitrary ancilla.

A natural relaxation is obtained by replacing the completely positive cone by the larger cone of doubly nonnegative matrices \cite{hall1963copositive, berman2003completely, johnston2022completely}. The resulting maps are called completely positive doubly nonnegative (CPDNN). The Choi-matrix characterizations of~\cite{johnston2022completely} show that a linear map is CPCP precisely when its Choi matrix is completely positive, whereas it is CPDNN precisely when its Choi matrix is doubly nonnegative. Since doubly nonnegative matrices are significantly easier to recognize than completely positive matrices, it is natural to ask when the relaxation from CPCP to CPDNN is exact \cite{dickinson2014computational, dur2021conic, johnston2022completely}.

For arbitrary linear maps, the two notions separate beyond the low-dimensional cases considered in~\cite{johnston2022completely}. For quantum channels, however, trace preservation imposes additional constraints on the Choi matrix. Johnston and Sikora therefore asked whether there exists a qutrit-to-qubit quantum channel $\Phi : M_3 \to M_2$ that is CPDNN but not CPCP~\cite[Section~5.2]{johnston2022completely}. This is the smallest nontrivial qubit-output case in which complete positivity of the Choi matrix is no longer automatic from the ambient matrix size alone \cite{diananda1962non, maxfield1962matrix, gray1980nonnegative, burer2009difference}.

The principal result of this paper is that the open qutrit-to-qubit separation does not exist. In fact, we prove the stronger statement that every CPDNN quantum channel $\Phi : M_n \to M_2$ is CPCP for every input dimension $n$. Equivalently, the doubly nonnegative relaxation is exact for all qubit-output quantum channels.

The proof reveals a broader structural phenomenon that is specific to binary output. After a canonical permutation of coordinates, every qubit-output CPDNN channel has a Choi matrix determined by a vector $p$ of diagonal parameters and an entrywise nonnegative matrix $B$, subject to a single positive-semidefinite block constraint. This is a full channel-level normal form, not merely a membership test for a cone. It shows that the output diagonal is controlled by the vector $p$, while the unique output off-diagonal degree of freedom is governed by the single matrix $B$. In the fixed standard basis, we refer to diagonal entries as \emph{populations} and off-diagonal entries as \emph{coherences}. In this language, a qubit output has exactly one independent output coherence mode.

The same normal form yields quantitative inequalities that constrain how strongly the input entries may contribute to that unique output coherence. These inequalities exhibit a population-coherence tradeoff: the coefficients in $B$ are controlled by the diagonal splitting parameters $p_i$ and $1-p_j$, and the row and column bounds show that they cannot all be simultaneously large. Thus the structural consequences obtained here are intrinsic to the $2\times 2$ output block algebra itself.

\section{Preliminaries}\label{sec:prelim}

Throughout, $M_n$ denotes the algebra of $n\times n$ complex matrices, $I_n$ denotes the $n\times n$ identity matrix, and $E_{ij}\in M_n$ denotes the matrix unit with a $1$ in position $(i,j)$ and $0$ elsewhere. A quantum channel is a linear map $\Phi : M_n \to M_m$ that is completely positive as a linear map and trace-preserving \cite{stinespring1955positive, kraus1971general, watrous2018theory}. Its Choi matrix is
$$
J(\Phi):=\sum_{i,j=1}^n E_{ij}\otimes \Phi(E_{ij})\in M_n\otimes M_m\cong M_{nm}
$$
\cite{jamiolkowski1972linear, choi1975completely, jiang2013channel}.
All entrywise notions are understood in the fixed standard basis.

\begin{definition}
Let $r\in\N$.
\begin{enumerate}[label=\textup{(\arabic*)},leftmargin=2.2em]
\item A symmetric matrix $A\in \R^{r\times r}$ is \emph{completely positive} if there exist $s\in\N$ and vectors $x_1,\dots,x_s\in\R_{\ge 0}^r$ such that
$
A=\sum_{t=1}^s x_t x_t^\top
$
\cite{markham1971factorizations, berman1988complete, dickinson2011geometry}.
We write $A\in \CP_r$.
\item A symmetric matrix $A\in \R^{r\times r}$ is \emph{doubly nonnegative} if $A\succeq 0$ and every entry of $A$ is nonnegative. We write $A\in \DNN_r$.
\end{enumerate}
\end{definition}

Clearly $\CP_r\subseteq \DNN_r$ for every $r\in\N$.

\begin{definition}
A linear map $\Phi : M_n \to M_m$ is called \emph{completely positive completely positive}, abbreviated \emph{CPCP}, if for every $k\in\N$ and every $X\in \CP_{kn}$ one has
\[
(\id_k\otimes \Phi)(X)\in \CP_{km}.
\]
It is called \emph{completely positive doubly nonnegative}, abbreviated \emph{CPDNN}, if for every $k\in\N$ and every $X\in \DNN_{kn}$ one has
\[
(\id_k\otimes \Phi)(X)\in \DNN_{km}.
\]
\end{definition}

We also use the graph $G(A)$ of a symmetric matrix $A=(a_{ij})\in\R^{r\times r}$: this is the simple graph on $\{1,\dots,r\}$ with an edge $\{i,j\}$ whenever $i\neq j$ and $a_{ij}\neq 0$ \cite{berman1987combinatorial, kogan1993characterization}.

The following three facts will be used repeatedly.
\begin{itemize}[leftmargin=1.8em]
\item By \cite[Theorems~1 and~4]{johnston2022completely}, a linear map $\Phi : M_n\to M_m$ is CPCP if and only if $J(\Phi)\in\CP_{nm}$, and it is CPDNN if and only if $J(\Phi)\in\DNN_{nm}$.
\item By the standard Choi correspondence, a matrix $J\in M_n\otimes M_m$ is the Choi matrix of a completely positive map if and only if $J\succeq 0$ \cite{choi1975completely, wilde2013quantum}.
\item By \cite[Theorem~3.1]{berman1988bipartite}, if $A\in\DNN_r$ and $G(A)$ is bipartite, then $A\in\CP_r$.
\end{itemize}

The next two lemmas provide simple consequences that will be needed later.

\begin{lemma}\label{lem:traceblocks}
Let $\Phi : M_n \to M_m$ be trace-preserving, and write
$
J(\Phi)=\sum_{i,j=1}^n E_{ij}\otimes J_{ij},
$
$
J_{ij}\in M_m.
$
Then
$
\Tr(J_{ij})=\delta_{ij}
$
for all $1\le i,j\le n$.
\end{lemma}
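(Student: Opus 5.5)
The plan is to read off the blocks of the Choi matrix directly. We then compare traces using the trace-preserving hypothesis applied to the matrix units. By the definition of $J(\Phi)$ as $\sum_{i,j} E_{ij}\otimes \Phi(E_{ij})$, the decomposition $J(\Phi)=\sum_{i,j} E_{ij}\otimes J_{ij}$ is unique. This is because the matrices $E_{ij}$ are linearly independent and the block $J_{ij}$ is simply the $(i,j)$ block of $J(\Phi)$ in $M_n\otimes M_m\cong M_n(M_m)$. So the first step is to identify $J_{ij}=\Phi(E_{ij})$ for all $1\le i,j\le n$.

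The second step uses trace preservation: $\Tr(\Phi(X))=\Tr(X)$ for every $X\in M_n$. Applying this with $X=E_{ij}$ gives
\[
\Tr(J_{ij})=\Tr(\Phi(E_{ij}))=\Tr(E_{ij})=\delta_{ij},
\]
since $E_{ij}$ has a single nonzero entry, located on the diagonal exactly when $i=j$.

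There is no real obstacle here. The only point requiring a word of justification is the identification of the blocks $J_{ij}$ with $\Phi(E_{ij})$, that is, uniqueness of the block decomposition. This I would state in one line by extracting the $(i,j)$ block, for example as $(e_i^\top\otimes I_m)\,J(\Phi)\,(e_j\otimes I_m)=\Phi(E_{ij})$, where $e_i$ is the $i$th standard basis vector of $\mathbb{C}^n$.
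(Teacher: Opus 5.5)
Your proposal is correct and matches the paper's proof: identify $J_{ij}=\Phi(E_{ij})$ from the definition of the Choi matrix, then apply trace preservation to $E_{ij}$ to get $\Tr(J_{ij})=\delta_{ij}$. Your block-extraction identity $(e_i^\top\otimes I_m)J(\Phi)(e_j\otimes I_m)=\Phi(E_{ij})$ simply spells out the step the paper takes ``by definition of the Choi matrix.''
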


\begin{proof}
For each pair $(i,j)$ we have $J_{ij}=\Phi(E_{ij})$ by definition of the Choi matrix. Since $\Phi$ is trace-preserving,
$
\Tr(J_{ij})=\Tr(\Phi(E_{ij}))=\Tr(E_{ij})=\delta_{ij}.
$
\end{proof}

\begin{lemma}\label{lem:permcp}
Let $A\in \CP_r$, and let $P$ be an $r\times r$ permutation matrix. Then $PAP^\top\in \CP_r$.
\end{lemma}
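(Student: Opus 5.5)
The plan is to work directly from the definition of $\CP_r$ and transport a nonnegative factorization through the permutation. Since $A\in\CP_r$, we may fix $s\in\N$ and vectors $x_1,\dots,x_s\in\R_{\ge 0}^r$ with $A=\sum_{t=1}^s x_t x_t^\top$. Conjugating by $P$ and using bilinearity of the outer product gives
\[
PAP^\top=\sum_{t=1}^s (Px_t)(Px_t)^\top ,
\]
because $P x_t x_t^\top P^\top=(Px_t)(Px_t)^\top$.

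It remains to check that this is an admissible factorization. A permutation matrix has exactly one entry equal to $1$ in each row and zeros elsewhere, so $Px_t$ merely reorders the coordinates of $x_t$. Hence $Px_t\in\R_{\ge 0}^r$ for every $t$. The matrix $PAP^\top$ is also symmetric, since $(PAP^\top)^\top=PA^\top P^\top=PAP^\top$. By the definition of $\CP_r$, it follows that $PAP^\top\in\CP_r$.

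There is no substantive obstacle here. The only point that needs stating explicitly is that permutation matrices preserve the nonnegative orthant, and this is immediate from their entries being in $\{0,1\}$.
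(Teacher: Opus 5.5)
Your proposal is correct and follows the paper's proof essentially line for line: take a factorization $A=\sum_t x_t x_t^\top$ with nonnegative $x_t$, conjugate by $P$ to get $\sum_t (Px_t)(Px_t)^\top$, and note that each $Px_t$ is a reordering of $x_t$ and hence nonnegative. Your explicit symmetry check is a harmless addition; the paper leaves it implicit because it already follows from the sum-of-outer-products form.
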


\begin{proof}
Because $A\in\CP_r$, there exist $s\in\N$ and vectors $x_1,\dots,x_s\in\R^r_{\ge 0}$ such that
$
A=\sum_{t=1}^s x_t x_t^\top.
$
Multiplying on the left by $P$ and on the right by $P^\top$ gives
$
PAP^\top=\sum_{t=1}^s (Px_t)(Px_t)^\top.
$
A permutation matrix merely reorders the coordinates of a vector, so each $Px_t$ is again entrywise nonnegative. Therefore $PAP^\top$ is a sum of outer products of entrywise nonnegative vectors, which means that $PAP^\top\in\CP_r$.
\end{proof}

\section{Binary-output normal form}\label{sec:normal}

We now isolate the rigid block structure forced by trace preservation and double nonnegativity in the qubit-output case.

\begin{lemma}\label{lem:collapse}
Let
$
J=\sum_{i,j=1}^n E_{ij}\otimes J_{ij}\in M_n\otimes M_2\cong M_{2n}
$
with $J\in\DNN_{2n}$ and $\Tr(J_{ij})=\delta_{ij}$ for all $1\le i,j\le n$. Write
\[
J_{ij}=\begin{pmatrix}
a_{ij} & b_{ij}\\
c_{ij} & d_{ij}
\end{pmatrix}
\qquad (1\le i,j\le n).
\]
Then for every $i\neq j$ one has
\[
a_{ij}=d_{ij}=0,
\qquad\text{and hence}\qquad
J_{ij}=\begin{pmatrix}
0 & b_{ij}\\
b_{ji} & 0
\end{pmatrix}.
\]
\end{lemma}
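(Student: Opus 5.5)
The argument uses only entrywise nonnegativity and symmetry of $J$, together with the trace condition; positive semidefiniteness is not needed. Fix $i\neq j$. By hypothesis $\Tr(J_{ij})=a_{ij}+d_{ij}=\delta_{ij}=0$.

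Since $J\in\DNN_{2n}$, every entry of $J$ is a nonnegative real number. The entries $a_{ij}$ and $d_{ij}$ are entries of $J$, located at rows $2(i-1)+1$ and $2(i-1)+2$ and columns $2(j-1)+1$ and $2(j-1)+2$ respectively. Hence $a_{ij}\ge 0$ and $d_{ij}\ge 0$. Two nonnegative reals with sum zero must both vanish, so $a_{ij}=d_{ij}=0$.

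For the off-diagonal entries of the block we use symmetry. $J$ is real symmetric, so $J^\top=J$. In block form this reads $J_{ij}^\top=J_{ji}$ for all $i,j$. Comparing the $(2,1)$ entry of $J_{ij}$ with the $(1,2)$ entry of $J_{ji}$ gives $c_{ij}=b_{ji}$. Therefore
\[
J_{ij}=\begin{pmatrix}0&b_{ij}\\ b_{ji}&0\end{pmatrix}.
\]

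There is no real obstacle here. The only point needing care is the block-index bookkeeping: we must check that $J^\top=J$ translates to $J_{ij}^\top=J_{ji}$, which holds because $(E_{ij}\otimes X)^\top=E_{ji}\otimes X^\top$.
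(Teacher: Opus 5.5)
Your proof is correct and is essentially the paper's argument: entrywise nonnegativity together with $a_{ij}+d_{ij}=\Tr(J_{ij})=0$ forces $a_{ij}=d_{ij}=0$, and the symmetry relation $J_{ij}^\top=J_{ji}$ gives $c_{ij}=b_{ji}$. Your remark that positive semidefiniteness is never used also applies to the paper's proof, which mentions it but does not rely on it.
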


\begin{proof}
Because $J\in\DNN_{2n}$, the matrix $J$ is real symmetric, positive semidefinite, and entrywise nonnegative. In particular,
$
a_{ij},\ b_{ij},\ c_{ij},\ d_{ij}\ge 0
$
for all $i,j$, and symmetry of $J$ implies
$
a_{ij}=a_{ji},
$
$
d_{ij}=d_{ji},
$
$
c_{ij}=b_{ji}.
$
Now fix distinct indices $i\neq j$. The trace condition gives
$
a_{ij}+d_{ij}=\Tr(J_{ij})=\delta_{ij}=0.
$
Thus, $a_{ij}=0$ and $d_{ij}=0$.
\end{proof}

\begin{theorem}[Matrix-level binary-output normal form]\label{thm:matrixnormal}
Let
$
J=\sum_{i,j=1}^n E_{ij}\otimes J_{ij}\in M_n\otimes M_2\cong M_{2n}.
$
Then the following are equivalent.
\begin{enumerate}[label=\textup{(\roman*)},leftmargin=2.2em]
\item $J\in\DNN_{2n}$ and $\Tr(J_{ij})=\delta_{ij}$ for all $1\le i,j\le n$.
\item There exist a vector $p=(p_1,\dots,p_n)\in[0,1]^n$ and an entrywise nonnegative matrix $B=(b_{ij})\in\R_{\ge 0}^{n\times n}$ such that, if $P$ denotes the permutation matrix that reorders the standard basis of $\mathbb{C}^n\otimes\mathbb{C}^2$ from
\[
(1,1),(1,2),(2,1),(2,2),\dots,(n,1),(n,2)
\]
to
\[
(1,1),(2,1),\dots,(n,1),(1,2),(2,2),\dots,(n,2),
\]
then
\begin{align*}
PJP^\top & =
\begin{pmatrix}
D_p & B \\
B^\top & I_n-D_p
\end{pmatrix},\\
D_p & :=\Diag(p_1,\dots,p_n),
\end{align*}
and the displayed block matrix is positive semidefinite.
\end{enumerate}
\end{theorem}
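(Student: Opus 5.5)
The plan is to prove the two implications separately, with \Cref{lem:collapse} doing the main work in the forward direction. The first step is to understand exactly what the permutation $P$ does. In the original ordering, the entry of $J$ in row $(i,k)$ and column $(j,l)$ is the $(k,l)$ entry of $J_{ij}$. After applying $P$, row index $(i,k)$ moves to position $i+(k-1)n$. Consequently, the $(1,1)$, $(1,2)$, $(2,1)$ and $(2,2)$ blocks of $PJP^\top$ are the $n\times n$ matrices $(a_{ij})$, $(b_{ij})$, $(c_{ij})$ and $(d_{ij})$ respectively. This is pure bookkeeping, and it will be stated once and then used in both directions.

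For (i)$\Rightarrow$(ii), I will apply \Cref{lem:collapse}. It gives $a_{ij}=d_{ij}=0$ for $i\neq j$ and $c_{ij}=b_{ji}$. Hence $(a_{ij})$ and $(d_{ij})$ are diagonal, and $(c_{ij})=B^\top$ where $B=(b_{ij})$. Next I set $p_i:=a_{ii}$. The trace condition $a_{ii}+d_{ii}=1$ gives $d_{ii}=1-p_i$, so the $(2,2)$ block is $I_n-D_p$. Entrywise nonnegativity of $J$ gives $p_i\ge 0$ and $1-p_i\ge 0$, so $p\in[0,1]^n$, and it also gives $B\ge 0$. Finally, $PJP^\top$ is positive semidefinite because it is a congruence of $J\succeq 0$ by an orthogonal matrix.

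For (ii)$\Rightarrow$(i), I will set $J:=P^\top M P$, where $M$ is the displayed block matrix. The matrix $M$ is real symmetric, since its diagonal blocks are diagonal and its off-diagonal blocks are $B$ and $B^\top$. It is entrywise nonnegative because $p_i\in[0,1]$ and $B\ge 0$, and it is positive semidefinite by hypothesis. All three properties survive permutation congruence, so $J\in\DNN_{2n}$. Reading the blocks back through the bookkeeping gives $J_{ij}=\begin{pmatrix}p_i\delta_{ij} & b_{ij}\\ b_{ji} & (1-p_i)\delta_{ij}\end{pmatrix}$. Therefore $\Tr(J_{ij})=\delta_{ij}$.

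I do not expect a genuine obstacle. The only step needing care is the index bookkeeping for $P$: checking that the $(1,2)$ block is $B$ rather than $B^\top$, consistent with the convention $c_{ij}=b_{ji}$ from \Cref{lem:collapse}. I will verify this on the single entry $(i,1),(j,2)$.
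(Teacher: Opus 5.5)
Your proposal is correct and follows the paper's proof essentially step for step. In the forward direction you use Lemma~\ref{lem:collapse} and the diagonal trace condition to set $p_i=a_{ii}$ and $B=(b_{ij})$; in the converse you read the blocks of $P^\top M P$ back off to get double nonnegativity and $\Tr(J_{ij})=\delta_{ij}$, and your explicit index bookkeeping for $P$ (row $(i,k)$ sent to position $i+(k-1)n$) makes precise what the paper leaves implicit.
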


\begin{proof}
Assume first that \textup{(i)} holds. Write
\[
J_{ij}=\begin{pmatrix}
a_{ij} & b_{ij}\\
c_{ij} & d_{ij}
\end{pmatrix}
\qquad (1\le i,j\le n).
\]
By Lemma~\ref{lem:collapse}, for every $i\neq j$ we have $a_{ij}=d_{ij}=0$ and $c_{ij}=b_{ji}$. For each diagonal block $J_{ii}$, the matrix $J$ is entrywise nonnegative, so $a_{ii},d_{ii}\ge 0$. Since $\Tr(J_{ii})=1$, we obtain
$
a_{ii}+d_{ii}=1.
$
Define
$
p_i:=a_{ii}\in[0,1]
$
and
$
B:=(b_{ij})_{i,j=1}^n.
$
Then $d_{ii}=1-p_i$ for every $i$, and the block matrices take the form
\[
J_{ii}=\begin{pmatrix}
p_i & b_{ii}\\
b_{ii} & 1-p_i
\end{pmatrix},
\qquad
J_{ij}=\begin{pmatrix}
0 & b_{ij}\\
b_{ji} & 0
\end{pmatrix}
\quad (i\neq j).
\]
Now apply the permutation $P$ described in the statement. Because $P$ merely groups all output-$0$ coordinates first and all output-$1$ coordinates second, the reordered matrix $A:=PJP^\top$ has the block form
\begin{equation}
\label{eq:A_definition}
A=
\begin{pmatrix}
D_p & B \\
B^\top & I_n-D_p
\end{pmatrix}.
\end{equation}
Since $P$ is orthogonal and $J\succeq 0$, we have $A\succeq 0$. Moreover, the entries of $B$ are entries of the entrywise nonnegative matrix $J$, so $B\ge 0$ entrywise. This proves \textup{(ii)}.

Conversely, assume that \textup{(ii)} holds, and set $A$ as in Eq.~\eqref{eq:A_definition}.
Because $p_i\in[0,1]$ for every $i$ and $B\ge 0$ entrywise, every entry of $A$ is nonnegative. The hypothesis also gives $A\succeq 0$. Therefore $A\in\DNN_{2n}$. Let
$
J:=P^\top A P.
$
As $P$ is a permutation matrix, $J$ is obtained from $A$ by permuting rows and columns, so $J$ is still positive semidefinite and entrywise nonnegative. Hence $J\in\DNN_{2n}$.

It remains to verify the block-trace conditions. Reading the block structure of $J$ from $A$, we have
\[
J_{ii}=\begin{pmatrix}
p_i & b_{ii}\\
b_{ii} & 1-p_i
\end{pmatrix},
\qquad
J_{ij}=\begin{pmatrix}
0 & b_{ij}\\
b_{ji} & 0
\end{pmatrix}
\quad (i\neq j).
\]
Therefore,
$
\Tr(J_{ii})=p_i+(1-p_i)=1
$
and
$
\Tr(J_{ij})=0 \quad (i\neq j).
$
Equivalently, $\Tr(J_{ij})=\delta_{ij}$ for all $i,j$. This proves \textup{(i)} and completes the equivalence.
\end{proof}

\begin{corollary}[Channel-level binary-output normal form]\label{cor:channelnormal}
Let $\Phi : M_n\to M_2$ be a CPDNN quantum channel. Then there exist $p=(p_1,\dots,p_n)\in[0,1]^n$ and an entrywise nonnegative matrix $B=(b_{ij})\in\R_{\ge 0}^{n\times n}$ such that
\[
P J(\Phi) P^\top=
\begin{pmatrix}
D_p & B \\
B^\top & I_n-D_p
\end{pmatrix}\succeq 0,
\]
where $P$ is the permutation from \Cref{thm:matrixnormal}. Conversely, every pair $(p,B)$ with these properties determines a CPDNN quantum channel with this Choi matrix.
\end{corollary}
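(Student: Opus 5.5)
The plan is to reduce both directions to \Cref{thm:matrixnormal}, using the Choi-matrix facts recalled in \Cref{sec:prelim} to pass between channels and matrices.

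\emph{Forward direction.} Let $\Phi : M_n\to M_2$ be a CPDNN quantum channel and set $J:=J(\Phi)$. By \cite[Theorems~1 and~4]{johnston2022completely}, the CPDNN property gives $J\in\DNN_{2n}$. Since $\Phi$ is trace-preserving, \Cref{lem:traceblocks} gives $\Tr(J_{ij})=\delta_{ij}$. Thus condition (i) of \Cref{thm:matrixnormal} holds. The implication (i)$\Rightarrow$(ii) then produces $p\in[0,1]^n$ and $B\ge 0$ with the stated block form of $PJP^\top$, and this block matrix is positive semidefinite.

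\emph{Converse.} Given $(p,B)$ with the block matrix positive semidefinite, I would define $J:=P^\top A P$, where $A$ is the block matrix. The implication (ii)$\Rightarrow$(i) of \Cref{thm:matrixnormal} shows that $J\in\DNN_{2n}$ and $\Tr(J_{ij})=\delta_{ij}$. Writing $J=\sum_{i,j}E_{ij}\otimes J_{ij}$, I then define the linear map $\Phi(E_{ij}):=J_{ij}$, extended linearly. By construction $J(\Phi)=J$, so the correspondence between maps and Choi matrices is a bijection.

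It remains to check three properties of $\Phi$:
\begin{itemize}
\item \emph{Complete positivity:} this follows from $J\succeq 0$ via the Choi correspondence.
\item \emph{Trace preservation:} by linearity, $\Tr\Phi(X)=\sum_{i,j}x_{ij}\Tr(J_{ij})=\sum_i x_{ii}=\Tr X$.
\item \emph{CPDNN:} this follows from $J\in\DNN_{2n}$ together with \cite[Theorems~1 and~4]{johnston2022completely}.
\end{itemize}
Hence $\Phi$ is a CPDNN quantum channel whose Choi matrix is the prescribed one.

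There is no substantive obstacle. The only point needing care is the converse, where the trace-preservation claim must be derived from the blockwise trace conditions by linearity rather than assumed.
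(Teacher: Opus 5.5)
Your proposal is correct and takes the same route as the paper. The forward direction combines the Johnston--Sikora characterization, \Cref{lem:traceblocks}, and (i)$\Rightarrow$(ii) of \Cref{thm:matrixnormal}; the converse uses (ii)$\Rightarrow$(i), the Choi correspondence for complete positivity, the block-trace identities for trace preservation (your explicit linearity computation spells out what the paper states in one line), and the characterization again for CPDNN.
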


\begin{proof}
If $\Phi$ is a CPDNN quantum channel, then the characterization of \cite{johnston2022completely} recalled in \Cref{sec:prelim} gives $J(\Phi)\in\DNN_{2n}$. Because $\Phi$ is trace-preserving, Lemma~\ref{lem:traceblocks} shows that the block traces satisfy $\Tr(J_{ij})=\delta_{ij}$. Applying \Cref{thm:matrixnormal} yields the required representation.

Conversely, let $(p,B)$ satisfy the stated conditions, and define
\[
A:=\begin{pmatrix}
D_p & B \\
B^\top & I_n-D_p
\end{pmatrix}\succeq 0,
\qquad
J:=P^\top A P.
\]
By \Cref{thm:matrixnormal}, the matrix $J$ belongs to $\DNN_{2n}$ and has block traces $\Tr(J_{ij})=\delta_{ij}$. By the standard Choi correspondence, $J$ is the Choi matrix of a completely positive map $\Phi : M_n\to M_2$, and the trace identities imply that $\Phi$ is trace-preserving. Hence $\Phi$ is a quantum channel. Finally, since $J\in\DNN_{2n}$, the characterization of \cite{johnston2022completely} recalled in \Cref{sec:prelim} shows that $\Phi$ is CPDNN.
\end{proof}

\begin{corollary}[Explicit action of the channel]\label{cor:action}
Let $\Phi : M_n\to M_2$ be a CPDNN quantum channel, and let $p$ and $B=(b_{ij})$ be as in Corollary~\ref{cor:channelnormal}. Then for every matrix $X=(x_{ij})\in M_n$,
\[
\Phi(X)=
\begin{pmatrix}
\sum_{i=1}^n p_i x_{ii} & \sum_{i,j=1}^n b_{ij}x_{ij} \\
\sum_{i,j=1}^n b_{ji}x_{ij} & \sum_{i=1}^n (1-p_i)x_{ii}
\end{pmatrix}.
\]
In particular, if $X$ is Hermitian, then the lower-left entry is the complex conjugate of the upper-right entry.
\end{corollary}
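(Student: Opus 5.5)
The plan is to recover $\Phi$ from its Choi matrix and then substitute the block form from \Cref{thm:matrixnormal}. By linearity, $X=\sum_{i,j}x_{ij}E_{ij}$ gives $\Phi(X)=\sum_{i,j}x_{ij}\Phi(E_{ij})=\sum_{i,j}x_{ij}J_{ij}$, since $J_{ij}=\Phi(E_{ij})$ by the definition of the Choi matrix (as used in Lemma~\ref{lem:traceblocks}).

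Next I will insert the explicit blocks. Corollary~\ref{cor:channelnormal} gives $J(\Phi)\in\DNN_{2n}$ with block traces $\delta_{ij}$. The proof of \Cref{thm:matrixnormal}, read through the permutation $P$, then gives
$J_{ii}=\begin{pmatrix} p_i & b_{ii}\\ b_{ii} & 1-p_i\end{pmatrix}$
and, for $i\neq j$,
$J_{ij}=\begin{pmatrix} 0 & b_{ij}\\ b_{ji} & 0\end{pmatrix}$.
Both cases are captured by the single formula
$J_{ij}=\begin{pmatrix}\delta_{ij}p_i & b_{ij}\\ b_{ji} & \delta_{ij}(1-p_i)\end{pmatrix}$.
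Summing entrywise against $x_{ij}$ yields the four displayed entries. The diagonal entries retain only the $i=j$ terms, giving $\sum_i p_i x_{ii}$ and $\sum_i (1-p_i)x_{ii}$. The off-diagonal entries give $\sum_{i,j} b_{ij}x_{ij}$ and $\sum_{i,j} b_{ji}x_{ij}$.

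The only point needing care is matching $B$ in the permuted matrix $PJP^\top$ with the $(1,2)$ entries of the blocks. The $(i,j)$ entry of the upper-right block of $PJP^\top$ is the entry of $J$ at row $(i,1)$, column $(j,2)$, which is the $(1,2)$ entry of $J_{ij}$, namely $b_{ij}$. Symmetry of $J$ then gives $c_{ij}=b_{ji}$.

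For the Hermitian claim, take $X^*=X$, so $x_{ij}=\overline{x_{ji}}$. Since $B$ is real, relabeling $i\leftrightarrow j$ gives
$\sum_{i,j} b_{ji}x_{ij}=\sum_{i,j} b_{ij}x_{ji}=\sum_{i,j} b_{ij}\overline{x_{ij}}=\overline{\sum_{i,j} b_{ij}x_{ij}}$.
None of these steps is a real obstacle; the computation is routine bookkeeping.
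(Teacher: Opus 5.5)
Your proposal is correct and takes essentially the same route as the paper: you expand $X=\sum_{i,j}x_{ij}E_{ij}$, use $\Phi(E_{ij})=J_{ij}$ with the explicit blocks from the proof of \Cref{thm:matrixnormal}, and prove the Hermitian claim by the same relabel-and-conjugate chain using that $B$ is real. Your check that the $(i,j)$ entry of the upper-right block of $PJP^\top$ equals $(J_{ij})_{12}=b_{ij}$ spells out an index-matching step that the paper leaves implicit.
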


\begin{proof}
Using the block decomposition of the Choi matrix obtained in the proof of \Cref{thm:matrixnormal}, we have
\[
\Phi(E_{ii})=J_{ii}=\begin{pmatrix}
p_i & b_{ii}\\
b_{ii} & 1-p_i
\end{pmatrix}
\]
and
\[
\Phi(E_{ij})=J_{ij}=\begin{pmatrix}
0 & b_{ij}\\
b_{ji} & 0
\end{pmatrix}
\quad (i\neq j).
\]
Now let $X=(x_{ij})\in M_n$. Since $X=\sum_{i,j=1}^n x_{ij}E_{ij}$ and $\Phi$ is linear,
$
\Phi(X)=\sum_{i,j=1}^n x_{ij}\Phi(E_{ij}).
$
Splitting the sum into diagonal and off-diagonal terms gives
\begin{align*}
\Phi(X)
&=\sum_{i=1}^n x_{ii}
\begin{pmatrix}
p_i & b_{ii}\\
b_{ii} & 1-p_i
\end{pmatrix}
+\sum_{\substack{i,j=1\\ i\neq j}}^n x_{ij}
\begin{pmatrix}
0 & b_{ij}\\
b_{ji} & 0
\end{pmatrix}\\
&=
\begin{pmatrix}
\sum_{i=1}^n p_i x_{ii} & \sum_{i,j=1}^n b_{ij}x_{ij} \\
\sum_{i,j=1}^n b_{ji}x_{ij} & \sum_{i=1}^n (1-p_i)x_{ii}
\end{pmatrix}.
\end{align*}
This proves the formula.

Assume now that $X$ is Hermitian. Then $x_{ji}=\overline{x_{ij}}$ for all $i,j$. Using the fact that the coefficients $b_{ij}$ are real, we obtain
\begin{align*}
\sum_{i,j=1}^n b_{ji}x_{ij}
&=\sum_{i,j=1}^n b_{ij}x_{ji}
=\sum_{i,j=1}^n b_{ij}\overline{x_{ij}}
=\overline{\sum_{i,j=1}^n b_{ij}x_{ij}}.
\end{align*}
Therefore the lower-left entry of $\Phi(X)$ is the complex conjugate of the upper-right entry.
\end{proof}

The preceding corollary makes the binary-output geometry transparent. The output diagonal depends only on the diagonal of the input, whereas the entire output off-diagonal part is encoded by the single scalar functional
$
L_B(X):=\sum_{i,j=1}^n b_{ij}x_{ij}.
$
This is the precise sense in which a qubit output has one independent coherence mode.

\section{Population-coherence tradeoff}\label{sec:tradeoff}

The normal form of Corollary~\ref{cor:channelnormal} yields quantitative restrictions on the coefficients that generate the unique output coherence mode.

\begin{proposition}[Weighted contraction factorization]\label{prop:contraction}
Let $p=(p_1,\dots,p_n)\in[0,1]^n$ and let $B=(b_{ij})\in\R_{\ge 0}^{n\times n}$ be such that
\[
A:=\begin{pmatrix}
D_p & B \\
B^\top & I_n-D_p
\end{pmatrix}\succeq 0.
\]
Define index sets
\[
I:=\{i: p_i>0\},
\qquad
J:=\{j: p_j<1\}.
\]
Then every row of $B$ whose index lies outside $I$ is zero, and every column of $B$ whose index lies outside $J$ is zero. If both $I$ and $J$ are nonempty, define
\begin{gather*}
D_I:=\Diag(p_i)_{i\in I},
\qquad
F_J:=\Diag(1-p_j)_{j\in J},\\
R:=D_I^{-1/2}B_{I,J}F_J^{-1/2}.
\end{gather*}
Then $R$ is entrywise nonnegative and satisfies
$
\|R\|_{2\to 2}\le 1.
$
Equivalently,
$
B_{I,J}=D_I^{1/2}RF_J^{1/2}
$
with $R\ge 0$ entrywise and $\|R\|_{2\to 2}\le 1$.
If either $I$ or $J$ is empty, then necessarily $B=0$.
\end{proposition}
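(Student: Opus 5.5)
The plan is to extract everything from the $2\times 2$ principal minors of $A$ and then from a Schur-complement argument.

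\emph{Zero rows and columns.} For indices $i,j$, the principal submatrix of $A$ on rows/columns $i$ (first block) and $n+j$ (second block) is
\[
\begin{pmatrix} p_i & b_{ij}\\ b_{ij} & 1-p_j\end{pmatrix}\succeq 0,
\]
so $b_{ij}^2\le p_i(1-p_j)$. If $i\notin I$, then $p_i=0$, hence $b_{ij}=0$ for every $j$, and row $i$ of $B$ vanishes. Symmetrically, if $j\notin J$, then $1-p_j=0$ and column $j$ vanishes. If $I=\emptyset$, every row is zero; if $J=\emptyset$, every column is zero. In either case $B=0$, which proves the last assertion.

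\emph{Contraction.} Now assume $I,J\neq\emptyset$. Take the principal submatrix of $A$ indexed by $I$ in the first block and $J$ in the second block:
\[
A' = \begin{pmatrix} D_I & B_{I,J}\\ B_{I,J}^\top & F_J\end{pmatrix}\succeq 0,
\]
since principal submatrices of a PSD matrix are PSD. Both $D_I$ and $F_J$ are positive definite by the definition of $I$ and $J$. Congruence by $\Diag(D_I^{-1/2},F_J^{-1/2})$ preserves positive semidefiniteness and yields
\[
\begin{pmatrix} I_{|I|} & R\\ R^\top & I_{|J|}\end{pmatrix}\succeq 0.
\]
By the Schur complement with respect to the invertible identity block, this is equivalent to $I-R^\top R\succeq 0$, that is, $\|R\|_{2\to 2}\le 1$. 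Entrywise nonnegativity of $R$ is immediate, because $R$ is obtained from $B_{I,J}\ge 0$ by scaling rows and columns by positive numbers. The factorization $B_{I,J}=D_I^{1/2}RF_J^{1/2}$ follows by inverting the definition of $R$.

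No step is a real obstacle. The only point requiring care is to restrict to the principal submatrix on $I\cup(n+J)$ before inverting, so that $D_I$ and $F_J$ are actually invertible; the rows and columns removed are exactly the ones shown to be zero in the first step.
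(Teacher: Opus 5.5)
Your proposal is correct and follows essentially the same route as the paper: $2\times 2$ principal minors to kill the rows indexed outside $I$ and the columns indexed outside $J$, then restriction to the principal submatrix on $I\cup(n+J)$, congruence by $\Diag(D_I^{-1/2},F_J^{-1/2})$, and the Schur complement to get $I-R^\top R\succeq 0$. The only cosmetic difference is that you derive the general bound $b_{ij}^2\le p_i(1-p_j)$ once and specialize it, whereas the paper treats the $p_i=0$ and $p_j=1$ cases separately.
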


\begin{proof}
We first show that rows indexed by $I^c$ vanish. Fix $i\notin I$, so $p_i=0$, and let $j\in\{1,\dots,n\}$ be arbitrary. Consider the $2\times 2$ principal submatrix of $A$ obtained by selecting the $i$th row and column from the first block and the $j$th row and column from the second block. This principal submatrix is
\[
\begin{pmatrix}
0 & b_{ij}\\
b_{ij} & 1-p_j
\end{pmatrix}.
\]
Because principal submatrices of positive-semidefinite matrices are positive semidefinite, the displayed matrix is positive semidefinite \cite{horn2012matrix}. Its determinant is $-b_{ij}^2$, so positive semidefiniteness forces $b_{ij}=0$. Since $j$ was arbitrary, the entire $i$th row of $B$ is zero.

The argument for columns indexed by $J^c$ is analogous. Fix $j\notin J$, so $p_j=1$. For arbitrary $i$, the relevant $2\times 2$ principal submatrix is
\[
\begin{pmatrix}
p_i & b_{ij}\\
b_{ij} & 0
\end{pmatrix},
\]
which is again positive semidefinite. Its determinant is $-b_{ij}^2$, hence $b_{ij}=0$. Thus the entire $j$th column of $B$ is zero.

If either $I$ or $J$ is empty, then every row or every column of $B$ vanishes, so $B=0$ and the remaining assertions are immediate. Assume from now on that both $I$ and $J$ are nonempty. After deleting the zero rows and zero columns identified above, the corresponding principal submatrix of $A$ is
\[
A_{I,J}:=\begin{pmatrix}
D_I & B_{I,J} \\
B_{I,J}^\top & F_J
\end{pmatrix}\succeq 0.
\]
Because $p_i>0$ for $i\in I$ and $1-p_j>0$ for $j\in J$, the diagonal matrices $D_I$ and $F_J$ are positive definite. Conjugating $A_{I,J}$ by the block-diagonal matrix
\[
\begin{pmatrix}
D_I^{-1/2} & 0 \\
0 & F_J^{-1/2}
\end{pmatrix}
\]
preserves positive semidefiniteness and yields
\[
\begin{pmatrix}
I_{|I|} & R \\
R^\top & I_{|J|}
\end{pmatrix}\succeq 0.
\]
Since the upper-left block is the identity matrix, its Schur complement is positive semidefinite \cite{haynsworth1968determination, zhang2006schur}. Therefore
$
I_{|J|}-R^\top R\succeq 0.
$
This implies that every singular value of $R$ is at most $1$, or equivalently $\|R\|_{2\to 2}\le 1$ \cite{bhatia2009positive, horn2012matrix}. The factorization
$
B_{I,J}=D_I^{1/2}RF_J^{1/2}
$
is simply the definition of $R$ rewritten. Finally, the matrices $D_I^{-1/2}$ and $F_J^{-1/2}$ are diagonal with strictly positive entries, so rescaling the rows and columns of the entrywise nonnegative matrix $B_{I,J}$ preserves entrywise nonnegativity. Hence $R\ge 0$ entrywise.
\end{proof}

\begin{corollary}[Population-coherence tradeoff]\label{cor:tradeoff}
Under the assumptions of Proposition~\ref{prop:contraction}, the coefficients of $B$ satisfy the pointwise bounds
\[
0\le b_{ij}\le \sqrt{p_i(1-p_j)}
\qquad\text{for all }1\le i,j\le n.
\]
Moreover, for every $i$ with $p_i>0$,
\[
\sum_{j\, :\, p_j<1}\frac{b_{ij}^2}{1-p_j}\le p_i,
\]
and for every $j$ with $p_j<1$,
\[
\sum_{i\, :\, p_i>0}\frac{b_{ij}^2}{p_i}\le 1-p_j.
\]
\end{corollary}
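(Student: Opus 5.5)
The plan is to read all three families of inequalities off Proposition~\ref{prop:contraction}, and to treat separately the indices that fall outside $I$ or $J$. For the pointwise bound there is a quicker route that avoids $R$ entirely. For any $i,j$, the $2\times 2$ principal submatrix of $A$ taken from row/column $i$ of the first block and row/column $j$ of the second block is
\[
\begin{pmatrix} p_i & b_{ij}\\ b_{ij} & 1-p_j\end{pmatrix}.
\]
This submatrix is positive semidefinite, so its determinant is nonnegative, which gives $b_{ij}^2\le p_i(1-p_j)$. Since $b_{ij}\ge 0$, taking square roots yields $0\le b_{ij}\le\sqrt{p_i(1-p_j)}$. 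The bound holds for all $i,j$, including the cases $p_i=0$ or $p_j=1$, where it correctly forces $b_{ij}=0$ in agreement with the vanishing rows and columns in Proposition~\ref{prop:contraction}.

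For the row and column bounds, I assume $I$ and $J$ are nonempty; otherwise the sums are empty or $B=0$, and the claim is trivial. I then use the factorization $R=D_I^{-1/2}B_{I,J}F_J^{-1/2}$ with $\|R\|_{2\to2}\le 1$. Every row of a matrix with operator norm at most $1$ has Euclidean norm at most $1$, because $\|R^\top e_i\|\le\|R^\top\|\,\|e_i\|=\|R\|\le 1$. The same argument applied to $Re_j$ shows that every column also has Euclidean norm at most $1$.

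Writing out the entries $r_{ij}=b_{ij}/\sqrt{p_i(1-p_j)}$ for $i\in I$, $j\in J$, the row condition becomes
\[
\sum_{j\in J}\frac{b_{ij}^2}{p_i(1-p_j)}\le 1 .
\]
Multiplying by $p_i>0$ gives the first sum inequality, whose index set $\{j:p_j<1\}$ is exactly $J$. The column condition, multiplied by $1-p_j>0$, gives the second. The only point needing care is this bookkeeping: the sums in the statement range exactly over $J$ and $I$, so no information is lost from the rows and columns outside $I$ and $J$, which vanish anyway. There is no real obstacle, since all the analytic content is already in Proposition~\ref{prop:contraction}.
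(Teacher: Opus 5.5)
Your proposal is correct and matches the paper's proof: the pointwise bound comes from the nonnegative determinant of the $2\times 2$ principal submatrix, and the row and column sums come from $\|R\|_{2\to 2}\le 1$. Your Euclidean row and column norms of $R$ are exactly the diagonal entries of $RR^\top\preceq I_{|I|}$ and $R^\top R\preceq I_{|J|}$ that the paper reads off, and your handling of empty $I$ or $J$ agrees with the paper's.
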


\begin{proof}
Fix indices $i$ and $j$. The $2\times 2$ principal submatrix of $A$ using the $i$th row and column from the first block and the $j$th row and column from the second block is
\[
\begin{pmatrix}
p_i & b_{ij}\\
b_{ij} & 1-p_j
\end{pmatrix}.
\]
Because $A\succeq 0$, this principal submatrix is positive semidefinite. In particular, its determinant is nonnegative:
$
p_i(1-p_j)-b_{ij}^2\ge 0.
$
Since also $b_{ij}\ge 0$, we obtain
$
0\le b_{ij}\le \sqrt{p_i(1-p_j)}.
$
This proves the pointwise bound.

Next assume $p_i>0$. If there is no $j$ with $p_j<1$, then the sum is empty and the inequality is trivial. Otherwise, the set
$
J=\{j:p_j<1\}
$
is nonempty and by Proposition~\ref{prop:contraction} we may form the matrix $R=D_I^{-1/2}B_{I,J}F_J^{-1/2}$ with $\|R\|_{2\to 2}\le 1$. The operator norm bound implies
$
RR^\top\preceq I_{|I|}.
$
Taking the diagonal entry corresponding to the row indexed by $i$ gives
\[
\sum_{j\, :\, p_j<1} \frac{b_{ij}^2}{p_i(1-p_j)}\le 1.
\]
Multiplying by $p_i$ yields
\[
\sum_{j\, :\, p_j<1} \frac{b_{ij}^2}{1-p_j}\le p_i.
\]
This is the row inequality.

The column inequality is similar. Fix $j$ with $p_j<1$. If there is no $i$ with $p_i>0$, the sum is empty and there is nothing to prove. Otherwise, $\|R\|_{2\to 2}\le 1$ implies
$
R^\top R\preceq I_{|J|}.
$
Taking the diagonal entry corresponding to the column indexed by $j$ gives
\[
\sum_{i\, :\, p_i>0} \frac{b_{ij}^2}{p_i(1-p_j)}\le 1.
\]
Multiplying by $1-p_j$ yields
\[
\sum_{i\, :\, p_i>0} \frac{b_{ij}^2}{p_i}\le 1-p_j.
\]
This proves the stated column bound.
\end{proof}

The interpretation is immediate from Corollary~\ref{cor:action}. The parameter $p_i$ describes how the $i$th input basis state is split between the two output populations, while the coefficient $b_{ij}$ measures how strongly the input entry $x_{ij}$ may contribute to the unique output coherence. Corollary~\ref{cor:tradeoff} therefore gives a genuine population-coherence tradeoff. In particular, the row and column inequalities show that each row and column of $B$ carries only a finite coherence budget determined by the diagonal parameters.

\begin{example}[The single-input case]\label{ex:n1}
If $n=1$, then the normal form reduces to a single output state
\[
\begin{pmatrix}
p & b\\
b & 1-p
\end{pmatrix}\succeq 0,
\qquad 0\le p\le 1,
\qquad b\ge 0.
\]
Positive semidefiniteness is equivalent to the scalar inequality
\[
0\le b\le \sqrt{p(1-p)}.
\]
Thus the maximal achievable coherence occurs at $p=\tfrac12$, whereas the coherence must be small when the population is concentrated near one basis state. Corollary~\ref{cor:tradeoff} is the many-input version of this same phenomenon.
\end{example}

\section{Exactness of the CPDNN relaxation}\label{sec:exactness}

We now return to the main exactness question.

\begin{theorem}[Matrix exactness for qubit output]\label{thm:matrixexact}
Let
$
J=\sum_{i,j=1}^n E_{ij}\otimes J_{ij}\in M_n\otimes M_2\cong M_{2n}.
$
Assume that $J\in\DNN_{2n}$ and that $\Tr(J_{ij})=\delta_{ij}$ for all $1\le i,j\le n$. Then $J\in\CP_{2n}$.
\end{theorem}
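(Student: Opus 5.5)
The plan is to reduce the statement to the bipartite-graph criterion of \cite[Theorem~3.1]{berman1988bipartite} via the normal form. By \Cref{thm:matrixnormal}, the hypotheses give
\[
A:=PJP^\top=\begin{pmatrix} D_p & B\\ B^\top & I_n-D_p\end{pmatrix}\in\DNN_{2n},
\]
where $P$ is the canonical permutation. By Lemma~\ref{lem:permcp} applied with $P^\top$, it suffices to show $A\in\CP_{2n}$, since then $J=P^\top A P\in\CP_{2n}$.

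Next we look at the graph $G(A)$ on the vertex set $\{1,\dots,2n\}$, split into $U=\{1,\dots,n\}$ (the output-$0$ coordinates) and $W=\{n+1,\dots,2n\}$ (the output-$1$ coordinates). The upper-left block $D_p$ and the lower-right block $I_n-D_p$ are diagonal, so no two distinct vertices of $U$ are adjacent, and no two distinct vertices of $W$ are adjacent. Every edge of $G(A)$ therefore comes from a nonzero entry of $B$ or $B^\top$ and joins $U$ to $W$. Hence $G(A)$ is bipartite with parts $U,W$. Loops are irrelevant, since $G(A)$ is defined as a simple graph using only $i\neq j$. Note that the diagonal entries $b_{ii}$ of $B$ sit off the diagonal of $A$, at positions $(i,n+i)$, so they too are $U$–$W$ edges.

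Since $A\in\DNN_{2n}$ and $G(A)$ is bipartite, \cite[Theorem~3.1]{berman1988bipartite} gives $A\in\CP_{2n}$, and we conclude as above.

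The only real content, and the step that needs care, is the block collapse of Lemma~\ref{lem:collapse}. The vanishing of $a_{ij}$ and $d_{ij}$ for $i\neq j$, forced by the trace condition together with nonnegativity, is exactly what makes both diagonal blocks of $A$ diagonal and hence the graph bipartite. Everything else is bookkeeping. I would also double-check that the cited bipartite theorem imposes no extra hypotheses, such as connectivity or a nonzero diagonal. If it required connectivity, I would apply it to each connected component separately, using that a direct sum of completely positive matrices is completely positive and that a permutation regrouping the components is harmless by Lemma~\ref{lem:permcp}.
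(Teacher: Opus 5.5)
Your proposal is correct and follows the paper's proof step for step: pass to $A=PJP^\top$ via \Cref{thm:matrixnormal}, observe that the diagonal blocks $D_p$ and $I_n-D_p$ make $G(A)$ bipartite, invoke \cite[Theorem~3.1]{berman1988bipartite}, and return to $J$ via Lemma~\ref{lem:permcp}. Your fallback for a possible connectivity hypothesis is unnecessary, since the bipartite criterion as recalled in \Cref{sec:prelim} carries no such assumption, but it does no harm.
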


\begin{proof}
By \Cref{thm:matrixnormal}, there exist $p\in[0,1]^n$, an entrywise nonnegative matrix $B\in\R_{\ge 0}^{n\times n}$, and the canonical permutation matrix $P$ such that
\[
A:=PJP^\top=
\begin{pmatrix}
D_p & B \\
B^\top & I_n-D_p
\end{pmatrix}\succeq 0.
\]
Because $A$ is positive semidefinite and entrywise nonnegative, we have $A\in\DNN_{2n}$.

Consider the graph $G(A)$. The upper-left block $D_p$ and the lower-right block $I_n-D_p$ are diagonal matrices, so there are no edges of $G(A)$ joining two vertices both lying in $\{1,\dots,n\}$ and no edges joining two vertices both lying in $\{n+1,\dots,2n\}$. Any off-diagonal nonzero entry of $A$ must therefore occur in one of the off-diagonal blocks $B$ or $B^\top$, and hence every edge joins a vertex from the first set to a vertex from the second set. It follows that $G(A)$ is bipartite, with bipartition
\[
\{1,\dots,n\}\cup\{n+1,\dots,2n\}.
\]
Therefore, by \cite[Theorem~3.1]{berman1988bipartite}, $A\in\CP_{2n}$. Finally, Lemma~\ref{lem:permcp} shows that
$
J=P^\top A P\in\CP_{2n}.
$
\end{proof}

\begin{theorem}[Exactness for qubit-output quantum channels]\label{thm:channelexact}
For every $n\in\N$, every CPDNN quantum channel $\Phi : M_n\to M_2$ is CPCP.
\end{theorem}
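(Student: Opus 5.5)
The plan is to reduce the channel statement to \Cref{thm:matrixexact} via the Choi-matrix characterizations recalled in \Cref{sec:prelim}. Let $\Phi : M_n\to M_2$ be a CPDNN quantum channel and write its Choi matrix as $J(\Phi)=\sum_{i,j=1}^n E_{ij}\otimes J_{ij}$ with $J_{ij}=\Phi(E_{ij})\in M_2$.

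\textbf{Step 1: place $J(\Phi)$ in $\DNN_{2n}$.} By \cite[Theorems~1 and~4]{johnston2022completely}, $\Phi$ is CPDNN if and only if $J(\Phi)\in\DNN_{2n}$. So the CPDNN hypothesis gives $J(\Phi)\in\DNN_{2n}$.

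\textbf{Step 2: record the trace constraints.} Since $\Phi$ is a quantum channel, it is trace-preserving. Lemma~\ref{lem:traceblocks} then yields $\Tr(J_{ij})=\delta_{ij}$ for all $1\le i,j\le n$.

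\textbf{Step 3: apply matrix exactness.} These are exactly the hypotheses of \Cref{thm:matrixexact}, so $J(\Phi)\in\CP_{2n}$.

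\textbf{Step 4: return to the channel.} By the converse direction of the same characterization in \cite{johnston2022completely}, a map whose Choi matrix is completely positive is CPCP. Hence $\Phi$ is CPCP.

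There is no real obstacle here, since the substantive work (the collapse of the off-diagonal blocks, and the bipartite-graph argument via \cite[Theorem~3.1]{berman1988bipartite}) is already contained in \Cref{thm:matrixexact}. The only points to state carefully are that the CPCP characterization is an equivalence, so it applies in both directions, and that the ordering convention for $J(\Phi)$ matches the one used in \Cref{thm:matrixexact}. Both hold by construction.
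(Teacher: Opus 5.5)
Your proposal is correct and follows the paper's proof step for step. The Johnston--Sikora characterization gives $J(\Phi)\in\DNN_{2n}$, Lemma~\ref{lem:traceblocks} gives $\Tr(J_{ij})=\delta_{ij}$, \Cref{thm:matrixexact} yields $J(\Phi)\in\CP_{2n}$, and the converse direction of the characterization gives CPCP. Your explicit remark that the characterization is used in both directions is a small clarification that the paper leaves implicit.
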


\begin{proof}
Let $\Phi : M_n\to M_2$ be a CPDNN quantum channel. Because $\Phi$ is CPDNN, the characterization of \cite{johnston2022completely} recalled in \Cref{sec:prelim} shows that $J(\Phi)\in\DNN_{2n}$. Because $\Phi$ is trace-preserving, Lemma~\ref{lem:traceblocks} gives
$
\Tr(J_{ij})=\delta_{ij}
$
for all $1\le i,j\le n$, where $J(\Phi)=\sum_{i,j=1}^n E_{ij}\otimes J_{ij}$.
Thus $J(\Phi)$ satisfies both hypotheses of \Cref{thm:matrixexact}, and we conclude that $J(\Phi)\in\CP_{2n}$. This implies that $\Phi$ is CPCP.
\end{proof}

\begin{corollary}\label{cor:qutritqubit}
There does not exist a qutrit-to-qubit quantum channel $\Phi : M_3\to M_2$ that is CPDNN but not CPCP.
\end{corollary}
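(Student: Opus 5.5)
This is the special case $n=3$ of \Cref{thm:channelexact}, so the plan is a short contradiction argument that invokes that theorem directly. Suppose, to the contrary, that some quantum channel $\Phi : M_3\to M_2$ is CPDNN but not CPCP. Applying \Cref{thm:channelexact} with $n=3$ shows that $\Phi$ is CPCP, which contradicts the assumption. Hence no such channel exists.

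For a self-contained account at the matrix level, I would trace the same chain with $n=3$:
\begin{enumerate}[label=\textup{(\arabic*)},leftmargin=2.2em]
\item Since $\Phi$ is CPDNN, the Choi characterization of \cite{johnston2022completely} gives $J(\Phi)\in\DNN_6$.
\item Since $\Phi$ is trace-preserving, Lemma~\ref{lem:traceblocks} gives $\Tr(J_{ij})=\delta_{ij}$.
\item \Cref{thm:matrixexact} then yields $J(\Phi)\in\CP_6$.
\item The CPCP characterization of \cite{johnston2022completely} then shows that $\Phi$ is CPCP.
\end{enumerate}

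The only point to check is that the hypotheses of \Cref{thm:channelexact} hold for every $n\in\N$, in particular $n=3$. They do, because neither the normal form of \Cref{thm:matrixnormal} nor the bipartite-graph argument depends on $n$. I expect no real obstacle here, since all the substantive work lies in \Cref{thm:matrixexact}.

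For intuition, the structure is visible explicitly for $n=3$. After the permutation $P$, the $6\times 6$ matrix $PJ(\Phi)P^\top$ is block-bipartite with diagonal blocks $D_p$ and $I_3-D_p$, so its graph is bipartite. This is exactly what rules out the non-CP doubly nonnegative obstructions, which need odd cycles (compare $\DNN_5\neq\CP_5$).
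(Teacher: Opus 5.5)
Your proposal is correct and matches the paper, which records this corollary as the immediate case $n=3$ of \Cref{thm:channelexact}. Your expanded chain (Choi characterization, \Cref{lem:traceblocks}, then \Cref{thm:matrixexact} via the bipartite graph of $PJ(\Phi)P^\top$) is exactly how that theorem is proved.
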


\begin{corollary}\label{cor:equality}
For every $n\in\N$, the set of CPDNN quantum channels $\Phi : M_n\to M_2$ coincides with the set of CPCP quantum channels $\Phi : M_n\to M_2$. Consequently, the normal form of Corollary~\ref{cor:channelnormal} classifies both classes of qubit-output quantum channels.
\end{corollary}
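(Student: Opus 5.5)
The statement to prove is Corollary~\ref{cor:equality}. I will establish the two inclusions between the set of CPCP quantum channels $\Phi : M_n\to M_2$ and the set of CPDNN quantum channels $\Phi : M_n\to M_2$ separately, and then read off the classification claim.

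\emph{CPCP $\subseteq$ CPDNN.} Let $\Phi$ be a CPCP quantum channel. By the Choi characterization of \cite{johnston2022completely} recalled in \Cref{sec:prelim}, $J(\Phi)\in\CP_{2n}$. Since $\CP_{2n}\subseteq\DNN_{2n}$, we get $J(\Phi)\in\DNN_{2n}$. The same characterization then shows that $\Phi$ is CPDNN. This direction uses neither trace preservation nor the qubit output.

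\emph{CPDNN $\subseteq$ CPCP.} Let $\Phi$ be a CPDNN quantum channel. Then \Cref{thm:channelexact} says directly that $\Phi$ is CPCP.

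Together the two inclusions give equality of the sets. For the ``consequently'' part, \Cref{cor:channelnormal} already puts the set of CPDNN quantum channels $M_n\to M_2$ in correspondence with the pairs $(p,B)$: here $p\in[0,1]^n$, $B\ge 0$ entrywise, and the block matrix $\begin{pmatrix} D_p & B\\ B^\top & I_n-D_p\end{pmatrix}$ is positive semidefinite. This correspondence is a bijection, because the entries of $p$ and $B$ are read off from entries of the Choi matrix, and the Choi matrix determines $\Phi$. Since the two sets of channels coincide, the same parametrization classifies the CPCP channels as well.

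I do not expect any real obstacle, since all the substantive work is done in \Cref{thm:matrixexact} and \Cref{thm:channelexact}. The only point that needs care is to use the Choi characterization in both directions, so that the trivial inclusion is justified at the level of maps and not only at the level of matrices.
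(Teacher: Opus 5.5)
Your proposal is correct and follows the same route as the paper: the nontrivial inclusion is exactly \Cref{thm:channelexact}, the reverse inclusion comes from the Choi characterizations together with $\CP_{2n}\subseteq\DNN_{2n}$, and the classification is then inherited from \Cref{cor:channelnormal}. You are also right that the easy inclusion has to go through the Choi matrix, because the CPDNN condition quantifies over the larger input set $\DNN_{kn}\supseteq\CP_{kn}$; the paper leaves this point unstated.
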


\section{Discussion and outlook}\label{sec:discussion}

Although the doubly nonnegative cone is strictly larger than the completely positive cone in higher matrix dimensions, this gap disappears for trace-preserving maps with qubit output. \Cref{thm:channelexact} resolves the qutrit-to-qubit question of Johnston and Sikora, but the proof shows more than the absence of a single counterexample. In the nontrivial binary-output regime, every CPDNN channel admits a complete normal form, and the free-channel cone becomes explicitly describable in terms of a vector $p\in[0,1]^n$ and a nonnegative matrix $B$ subject to one positive-semidefinite block constraint.

The mechanism is qubit-specific. For $i\neq j$, trace preservation forces $\Tr(J_{ij})=0$. When $J_{ij}$ is a $2\times 2$ entrywise nonnegative block, this implies that its diagonal entries vanish, leaving only one off-diagonal slot. That is exactly the collapse isolated in Lemma~\ref{lem:collapse}, and it is the reason why the entire channel can be encoded by a single matrix $B$. In output dimension $3$ or higher, trace-zero entrywise nonnegative blocks need not collapse in this way \cite{johnston2022completely}.
Thus the binary-output rigidity exploited here is the structural reason exactness holds.

The inequalities of Corollary~\ref{cor:tradeoff} provide an additional qubit-specific consequence. They show that the unique output coherence mode is quantitatively controlled by the diagonal splitting parameters. Equivalently, the normal form bounds the extent to which different input entries can contribute to the single output coherence slot.

Several questions remain open. The most immediate one is whether any analogue of this exactness phenomenon survives for output dimension $3$ or larger, perhaps under additional symmetry, sparsity, or normalization assumptions. More broadly, it would be interesting to determine which combinations of input dimension, output dimension, and physical constraints make the CPDNN relaxation exact for quantum channels. From the perspective of the resource theory of nonnegative amplitudes, these questions are part of the larger problem of understanding when tractable relaxations preserve the exact geometry and operational meaning of the free operations \cite{bomze2000copositive, brandao2015reversible}.



\bibliographystyle{IEEEtran}
\bibliography{references}

\end{document}